\newtheorem{theorem}{Theorem}
\newtheorem{lemma}[theorem]{Lemma}
\newtheorem{corollary}[theorem]{Corollary}
 \newcommand{\ket}[1]{|#1\rangle}
 \newcommand{\bra}[1]{\langle #1|}
 \newcommand{\Dim}{\mathrm{Dim}}
\newcommand{\Tr}{{\mathrm {Tr}}}
\newcommand{\etal}{\textit{et al.} }
\begin{document}
\preprint{APS/123-QED}

\title{Reexamination  of strong subadditivity: A quantum-correlation approach
}

\author{Razieh Taghiabadi}
\author{Seyed Javad Akhtarshenas}
\email{akhtarshenas@um.ac.ir}
\author{Mohsen Sarbishaei}
\affiliation{Department of Physics, Ferdowsi University of Mashhad, Mashhad, Iran}

\begin{abstract}
The  strong subadditivity inequality  of von Neumann entropy  relates the  entropy of subsystems of a tripartite state  $\rho_{ABC}$ to that of the composite system. Here, we define $\boldsymbol{T}^{(a)}(\rho_{ABC})$ as the extent to which $\rho_{ABC}$ fails to satisfy  the strong subadditivity inequality $S(\rho_{B})+S(\rho_{C}) \le S(\rho_{AB})+S(\rho_{AC})$ with equality and investigate its properties. In particular, by introducing auxiliary subsystem $E$, we consider any  purification $|\psi_{ABCE}\rangle$ of $\rho_{ABC}$ and formulate  $\boldsymbol{T}^{(a)}(\rho_{ABC})$ as  the extent to which the  bipartite quantum correlations of $\rho_{AB}$ and $\rho_{AC}$, measured by entanglement of formation and quantum discord, change under the transformation $B\rightarrow BE$ and $C\rightarrow CE$.  Invariance  of quantum  correlations of $\rho_{AB}$ and $\rho_{AC}$  under such transformation is shown to be a necessary and sufficient condition for vanishing $\boldsymbol{T}^{(a)}(\rho_{ABC})$.  Our approach allows one to characterize, intuitively,  the structure of states for which the strong subadditivity is saturated. Moreover,  along with providing a  conservation law for quantum correlations of states  for which the strong subadditivity inequality is satisfied with equality, we find that such states  coincides with those that   the Koashi-Winter monogamy relation is saturated.
\end{abstract}
%
%
\maketitle

\textit{Introduction.---}Correlations between different subsystems of a classical or quantum composite system result to inequalities relating the  entropy  of various subsystems to that of the composite  system.     For a given state $\rho_{AB}$ of the quantum system $\mathcal{H}_{AB}$, consisting of two subsystems  $\mathcal{H}_{A}$  and $\mathcal{H}_{B}$, the \emph{subadditivity} (SA) states that \cite{WehrlRMP1978}
\begin{equation}\label{SA}
S(\rho_{AB})\le S(\rho_{A})+S(\rho_{B}),
\end{equation}
where $\rho_{A}=\Tr_{B}{(\rho_{AB})}$ and  $\rho_{B}=\Tr_{A}{(\rho_{AB})}$ are states of the subsystems and $S(\rho)=-\Tr{\rho\log{\rho}}$  is the von Neumann entropy of $\rho$. The  inequality implies that any correlation between subsystems decreases the amount of information needed to specify one of the subsystems once we know the other one \cite{ZyczkowskiBook2006}. The equality holds if and only if the subsystems are uncorrelated, i.e., $\rho_{AB}=\rho_{A}\otimes \rho_{B}$. It turns out therefore that the extent to which the state $\rho_{AB}$ fails to satisfy the SA with  equality is a measure of total correlations (classical+quantum) and is  defined as the \emph{mutual information} \cite{NielsenBook2000}
\begin{equation}\label{MI}
I(\rho_{AB})=S(\rho_{A})+S(\rho_{B})-S(\rho_{AB}).
\end{equation}
A stronger  inequality holds  when the composite system is composed of three subsystems. Suppose $\rho_{ABC}$ is a  quantum state of the composite system  $\mathcal{H}_{ABC}=\mathcal{H}_{A}\otimes\mathcal{H}_{B}\otimes\mathcal{H}_{C}$. The \emph{strong subadditivity} (SSA),  which was conjectured for quantum systems by Lanford and Robinson \cite{LanfordJMP1968} and  proved by Lieb and Ruskai \cite{LiebRuskaiJMP1975}, states that
\begin{equation}\label{SSA1}
S(\rho_{ABC})+S(\rho_{C}) \le S(\rho_{AC})+S(\rho_{BC}).
\end{equation}
Clearly, by choosing $\mathcal{H}_C=\mathbb{C}$,  Eq. \eqref{SSA1}  recovers the  subadditivity relation \eqref{SA}.
The SSA inequality \eqref{SSA1} is equivalent to
\begin{equation}\label{SSA2}
S(\rho_{B})+S(\rho_{C}) \le S(\rho_{AB})+S(\rho_{AC}),
\end{equation}
which can  be seen  by adding an auxiliary  subsystem $\mathcal{H}_{E}$ such that $\rho_{ABCE}$ is a pure state.
The SSA inequality plays a crucial role in quantum information theory. It appears almost everywhere in quantum information theory from Holevo bound on the accessible information in the quantum ensemble \cite{HolevoPIT1973,SchumacherPRL1996,RogaPRL2010}, properties of coherent information \cite{SchumacherPRA1996,BarnumPRA1998,LloydPRA1997}, definition of squashed entanglement \cite{ChristandlJMP2004,CarlenLMP2012},  monogamy of quantum correlations \cite{KoashiPRA2004,FanchiniPRA2011,YangPRA2013}, to   some features of quantum discord such as  condition for nullity \cite{DattaARXIV1003.5256}, and condition for saturating the upper bound \cite{XiPRA2012}.  Note that the SSA inequality \eqref{SSA2} can be written also as \cite{NielsenBook2000}
\begin{equation}\label{SSA2p}
0 \le S_{\rho_{AB}}(A \mid B)+S_{\rho_{AC}}(A \mid C),
\end{equation}
where
\begin{equation}\label{CE}
S_{\rho_{AB}}(A \mid B)=S(\rho_{AB})-S(\rho_{B}),
\end{equation}
is the \emph{quantum conditional entropy}  which, unlike its classical counterpart, can take negative values.   This latter form of SSA  emphasises how highly SSA is nontrivial in the quantum case in a sense that although each term on the right-hand side of \eqref{SSA2p} can be negative, both of them cannot be negative simultaneously.

Contrary to the SA inequality, the characterization of states for which the SSA is saturated is not  trivial. Obviously, when the global state is factorized, i.e.,  $\rho_{ABC}=\rho_{A}\otimes \rho_{B}\otimes \rho_{C}$, the equality \eqref{SSA1} holds but the converse is not true in general. Petz  \cite{PetzCMP1986} and Ruskai \cite{RuskaiJMP2002} have provided algebraic criteria to check that if any given state satisfies the inequality with equality, however, their description  do not characterize the structure of such states. An important progress in providing the structure of states for which the SSA inequality is satisfied with equality is given in  Ref.  \cite{HaydenCMP2004}. Hayden \etal have shown that the state $\rho_{ABC}$ satisfies the SSA inequality \eqref{SSA1} with equality if and only if there is a decomposition of the subsystem $\mathcal{H}_C$ into a direct orthogonal sum of tensor products as $\mathcal{H}_C=\bigoplus_{j} \mathcal{H}_{C_j^L}\otimes \mathcal{H}_{C_j^R}$ such that $\rho_{ABC}=\bigoplus_{j} q_j\rho_{AC_j^L}\otimes \rho_{C_j^RB}$, with states $\rho_{AC_j^L}$ on $\mathcal{H}_A\otimes \mathcal{H}_{C_j^L}$ and $\rho_{C_j^RB}$ on  $\mathcal{H}_{C_j^R}\otimes \mathcal{H}_B$, and a probability distribution $\{q_j\}$. On the basis of the results of \cite{HaydenCMP2004}, the structure of states for which the SSA  inequality \eqref{SSA2} is saturated  is given in \cite{ZhangCTP2015}.

As it is mentioned above any deviation of SA from equality  refers to some correlations existing in the bipartite state. It is therefore natural to ask the question: Is it possible to express deviation from equality of the SSA inequality  \eqref{SSA2} to the existence of some kind of correlations.    In this paper, we address this issue and define the quantity $\boldsymbol{T}^{(a)}(\rho_{ABC})$ as  the extent to which the SSA inequality \eqref{SSA2} deviates from equality   in terms of   two different aspects of quantum correlations, i.e., entanglement of formation (EOF) \cite{BennettPRA1996} and quantum discord (QD)  \cite{ZurekPRL2001,HendersonJPA2001}.  Our main results are Theorems \ref{TheoremDeltaCorrelations} and \ref{TheoremSSAcondition} and also Corollary \ref{Corollary-Results}. To be specific,  Theorem \ref{TheoremDeltaCorrelations}  shows that  $\boldsymbol{T}^{(a)}(\rho_{ABC})$ can be expressed  by means of  the extent to which the  bipartite quantum correlations of $\rho_{AB}$ and $\rho_{AC}$, measured by EOF and QD, change under the transformation $B\rightarrow BE$ and $C\rightarrow CE$,   where $E$ is  an auxiliary subsystem  that purifies $\rho_{ABC}$. In Theorem \ref{TheoremSSAcondition}, we use this measure and characterize the structure of states for which the SSA inequality \eqref{SSA2} is satisfied with equality. Our approach provides an information-theoretic aspect for such states, that is, $\boldsymbol{T}^{(a)}(\rho_{ABC})=0$ if and only if quantum  correlations of $\rho_{AB}$ and $\rho_{AC}$  do not change under the above transformation.  This, however, can be regarded as a kind of  conservation law for quantum correlations, i.e.,  if $\boldsymbol{T}^{(a)}(\rho_{ABC})=0$ then the sum of entanglement of formation of $\rho_{AB}$ and $\rho_{AC}$ is equal to the sum of  their quantum discord. Moreover, we find that the class of states saturating the Koashi-Winter inequality  \eqref{KWinequality} coincides with those states that the SSA inequality is satisfied with equality.

\textit{Monogamy of quantum correlations.---}Quantum entanglement and quantum discord are two aspects of quantum correlations defined respectively within entanglement-separability paradigm and an information-theoretic perspective. Quantum entanglement is defined as those correlations that
cannot be generated by local operations and classical communication \cite{WernerPRA1989}. Several measures have been proposed to quantify quantum entanglement,  the most important one  is the \emph{entanglement of formation} (EOF) \cite{WoottersPRL1998}, defined for a bipartite state $\rho_{AB}$ as
\begin{equation}
E(\rho_{AB})=\min\sum_{i}p_i E(\psi_i),
\end{equation}
where minimum is taken over all pure state decompositions $\rho_{AB}=\sum_{i}p_i\ket{\psi_i}\bra{\psi_i}$ and $E(\psi_i)=S(\Tr_{B}[\ket{\psi_i}\bra{\psi_i}])$.
However, quantum entanglement cannot capture all nonclassical correlations of a composite state in a sense that  a composite mixed  state may exhibit some quantum correlations even if it is disentangled. From various measures proposed for this different aspect of quantum correlation, \emph{quantum discord} (QD)  has received a great deal of attention. For a bipartite state $\rho_{AB}$,   quantum discord is defined as the difference between two classically equivalent  but quantum mechanically different definitions of quantum mutual information
\begin{equation}\label{QD}
D^{(B)}(\rho_{AB})=I(\rho_{AB})-J^{(B)}(\rho_{AB}),
\end{equation}
where $J^{(B)}(\rho_{AB})=\max_{\{\Pi_i^B\}}J^{\{\Pi_i^{B}\}}(\rho_{AB})$  is the classical correlation of the state $\rho_{AB}$. Moreover  \cite{ZurekPRL2001}
\begin{equation}\label{CC}
J^{\{\Pi_i^{B}\}}(\rho_{AB})=S(\rho_{A})-\sum_ip_iS(\rho_{AB}|\Pi_i^{B}),
\end{equation}
where  $\{\Pi_i^{B}\}$ is the set of projection operators on the subsystem $B$, and  $S(\rho_{AB}|\Pi_i^{B})$ is the conditional entropy of $A$ when measurement is performed on $B$ and the $i$-th outcome is obtained with probability $p_i=\mathrm{Tr}[\Pi_i^{B}\rho_{AB}\Pi_i^{B}]$.  We notice here that  QD is not,  in general,  symmetric under the swap of the two parties, $A\leftrightarrow B$.

For a general tripartite state $\rho_{ABC}$ there exists a monogamic  relation   between EOF and QD  of its corresponding bipartite mixed  states, the so-called Koashi-Winter (K-W) relation \cite{KoashiPRA2004}
\begin{equation}\label{KWinequality}
E(\rho_{AB})\leq D^{(C)}(\rho_{AC})+S_{\rho_{AC}}({A\mid C}).
\end{equation}
When $\rho_{ABC}$ is  pure, the inequality is saturated
and   $\rho_{ABC}$ is called a purification  of the mixed states $\rho_{AB}$ and $\rho_{AC}$. In this case  the state $\rho_{AC}$ is called $B$-complement to $\rho_{AB}$ and, similarly, $\rho_{AB}$ is called $C$-complement to $\rho_{AC}$. For this particular case of pure global state $\rho_{ABC}$, the following \emph{quantum conservation law}, as it is called by Fanchini \etal   \cite{FanchiniPRA2011}, is obtained
\begin{equation}\label{ConservationLaw}
E(\rho_{AB})+E(\rho_{AC})=D^{(B)}(\rho_{AB})+D^{(C)}(\rho_{AC}).
\end{equation}
Moreover, Cen \etal \cite{CenPRA2011} have used the K-W relation and proposed a schema to  quantify the  QD and EOF and their ordering relation. In particular, they have characterized the QD of an arbitrary two-qubit state reduced from pure three-qubit states and a class of rank 2 mixed states of $4\times 2$ systems.  In \cite{BaiPRA2013}, the authors used the  K-W relation and explored the monogamy property of the square of QD in mutipartite systems. They have shown that the square QD is monogamous for three-qubit pure states.

\textit{Deviation from equality  of strong subadditivity.---}Let $\boldsymbol{T}^{(a)}(\rho_{ABC})$ denotes  the degree to which $\rho_{ABC}$ fails to saturate  SSA, i.e., the difference between the right-hand side and the left-hand side of Eq. \eqref{SSA2}
\begin{eqnarray}\label{DeltaARight-Left}
\boldsymbol{T}^{(a)}(\rho_{ABC})=S(\rho_{AB})+S(\rho_{AC})-S(\rho_{B})-S(\rho_{C}).
\end{eqnarray}
Then the following Lemma states that $\boldsymbol{T}^{(a)}(\rho_{ABC})$ is a concave function of its input state $\rho_{ABC}$.
\begin{lemma}\label{LemmaDeltaConcavity}
Suppose  $\{p_k,\rho_{ABC}^k\}_{k}$ is an ensemble of states generating $\rho_{ABC}$, that is, $\rho_{ABC}=\sum_{k}^Kp_k\rho_{ABC}^k$. Then
\begin{equation}\label{DeltaConcavity}
\boldsymbol{T}^{(a)}(\rho_{ABC})\ge \sum_{k}^Kp_k \boldsymbol{T}^{(a)}(\rho_{ABC}^k).
\end{equation}
Moreover, the equality holds  if the marginal  states  $\{\rho_{B}^k\}_k$ and $\{\rho_{C}^k\}_k$ are mutually  orthogonal, i.e.,  $\rho_{B}^k\perp \rho_{B}^{k^\prime}$ and $\rho_{C}^k\perp \rho_{C}^{k^\prime}$ for $k\ne k^\prime$.
\end{lemma}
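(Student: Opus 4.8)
The plan is to recognise $\boldsymbol{T}^{(a)}$ as a sum of two quantum conditional entropies and then reduce the inequality \eqref{DeltaConcavity} to the concavity of the conditional entropy, which is itself a consequence of SSA. Using \eqref{CE}, one has
\[
\boldsymbol{T}^{(a)}(\rho_{ABC})=S_{\rho_{AB}}(A\mid B)+S_{\rho_{AC}}(A\mid C).
\]
Since the marginals of a convex combination are the corresponding convex combinations, $\rho_{AB}=\sum_k p_k\rho_{AB}^k$ and $\rho_{AC}=\sum_k p_k\rho_{AC}^k$, it suffices to show that the map $\rho_{AB}\mapsto S_{\rho_{AB}}(A\mid B)$ is concave, and then add the inequality obtained for the $AB$ term to the one obtained for the $AC$ term.

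To prove concavity of the conditional entropy I would introduce a classical flag system $K$ with orthonormal basis $\{\ket{k}\}$ and form the block-diagonal state $\sigma_{ABK}=\sum_k p_k\,\rho_{AB}^k\otimes\project{k}$. Block-diagonality gives $S(\sigma_{ABK})=-\sum_k p_k\log p_k+\sum_k p_k S(\rho_{AB}^k)$ and, tracing out $A$, $S(\sigma_{BK})=-\sum_k p_k\log p_k+\sum_k p_k S(\rho_{B}^k)$, hence $S_{\sigma}(A\mid BK)=\sum_k p_k S_{\rho_{AB}^k}(A\mid B)$. On the other hand $\sigma_{AB}=\rho_{AB}$ and $\sigma_B=\rho_B$, so $S_{\sigma}(A\mid B)=S_{\rho_{AB}}(A\mid B)$. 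Now SSA \eqref{SSA1}, rewritten in conditional-entropy form as the statement that discarding a subsystem cannot decrease the conditional entropy, $S_{\sigma}(A\mid BK)\le S_{\sigma}(A\mid B)$, immediately yields $\sum_k p_k S_{\rho_{AB}^k}(A\mid B)\le S_{\rho_{AB}}(A\mid B)$, which is the asserted concavity. Applying this with the decomposition $\{p_k,\rho_{AB}^k\}_k$ and, separately, with $\{p_k,\rho_{AC}^k\}_k$, and summing, gives \eqref{DeltaConcavity}.

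For the equality statement I would argue directly rather than through the concavity chain. If $\rho_B^k\perp\rho_B^{k'}$ for $k\ne k'$, then the support of $\rho_{AB}^k$ is contained in $\mathcal{H}_A\otimes\mathrm{supp}(\rho_B^k)$, and these subspaces are mutually orthogonal; hence the $\rho_{AB}^k$ are themselves mutually orthogonal, and likewise the $\rho_{AC}^k$ are mutually orthogonal once $\rho_C^k\perp\rho_C^{k'}$. For a mixture of mutually orthogonal states the von Neumann entropy is additive up to the flag term, so each of $S(\rho_{AB})$, $S(\rho_{AC})$, $S(\rho_{B})$, $S(\rho_{C})$ equals $-\sum_k p_k\log p_k$ plus the $\{p_k\}$-average of the corresponding block entropy. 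Substituting into \eqref{DeltaARight-Left}, the four copies of $-\sum_k p_k\log p_k$ cancel, leaving $\boldsymbol{T}^{(a)}(\rho_{ABC})=\sum_k p_k\boldsymbol{T}^{(a)}(\rho_{ABC}^k)$.

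The only points requiring care are bookkeeping ones: fixing the correct direction of SSA after it is recast in terms of conditional entropies, and, in the equality case, noting that orthogonality of the $B$- and $C$-marginals is inherited by the full $AB$- and $AC$-blocks, which is precisely what makes $S$ additive across the blocks. I do not expect a genuine obstacle here; the lemma is essentially a repackaging of strong subadditivity together with the block structure of flagged ensembles.
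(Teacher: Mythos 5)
Your proof is correct and is essentially the paper's own argument: the inequality $\chi_{\rho_{AB}}\ge\chi_{\rho_B}$ for the Holevo quantity that the paper cites is, after rearranging, exactly the concavity of the conditional entropy $S_{\rho_{AB}}(A\mid B)$ that you establish via the flagged state $\sigma_{ABK}$ and SSA, so you have merely made the cited black box self-contained. Your treatment of the equality case --- orthogonality of the $B$- and $C$-marginals lifts to orthogonality of the $AB$- and $AC$-blocks, after which the four $H(p_k)$ terms cancel in \eqref{DeltaARight-Left} --- likewise matches the paper's reasoning.
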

\begin{proof}
For state $\rho=\sum_{k}^Kp_k\rho^k$, acting on $\mathcal{H}$, define the \emph{Holevo quantity} $\chi_{\rho}=S(\rho)-\sum_kp_kS(\rho^k)$. The inequality \eqref{DeltaConcavity} then easily obtained by noting that for any bipartite state $\rho_{AB}$ we have $\chi_{\rho_{AB}}\ge \chi_{\rho_B}$ \cite{SchumacherPRL1996,NielsenBook2000} (For a  different proof of \eqref{DeltaConcavity} see Ref.  \cite{NielsenBook2000}).    For the second claim of the Lemma note that  for any state $\rho=\sum_{k}p_k\rho^k$ we have  $S\left(\sum_{k}p_k\rho^k\right)\le H(p_k)+\sum_{k}p_kS(\rho^k)$, with equality  if and only if  the states $\{\rho^k\}_k$ have support in orthogonal subspaces $\mathcal{H}_{k}$ of the  Hilbert space $\mathcal{H}=\bigoplus_{k=1}^{K}\mathcal{H}_{k}$ \cite{NielsenBook2000}, i.e., they are mutually orthogonal in a sense that $\rho^k\perp \rho^{k^\prime}$ for $k\ne k^\prime$. Using this we find that the orthogonality condition for the marginal states  $\{\rho_{AB}^k\}_k$, $\{\rho_{AC}^k\}_k$, $\{\rho_{B}^k\}_k$, and $\{\rho_{C}^k\}_k$ implies the equality of the inequality \eqref{DeltaConcavity}. The proof becomes complete if recalling that for any two bipartite states $\rho_{AB}$ and $\rho_{AB}^\prime$, the orthogonality of parts implies the orthogonality of whole, i.e.,  $\rho_{B}\perp\rho_{B}^\prime$ implies $\rho_{AB}\perp\rho_{AB}^\prime$.
\end{proof}
The following corollary is immediately obtained from Lemma  \ref{LemmaDeltaConcavity}.
\begin{corollary}\label{CorollaryZeroDelta}
(i) Let $\rho_{ABC}$ be a state with vanishing $\boldsymbol{T}^{(a)}(\rho_{ABC})$, i.e., $\boldsymbol{T}^{(a)}(\rho_{ABC})=0$. Then  for any ensemble  $\{p_k,\rho_{ABC}^k\}_{k}$ giving rise to $\rho_{ABC}$ we have  $\boldsymbol{T}^{(a)}(\rho_{ABC}^k)=0$ for $k=1,\cdots,K$. (ii) Moreover,  any ensemble $\{p_k,\rho_{ABC}^k\}_{k}$ with $\boldsymbol{T}^{(a)}(\rho_{ABC}^k)=0$ generates a state $\rho_{ABC}=\sum_{k}^{K}p_k\rho_{ABC}^k$ with $\boldsymbol{T}^{(a)}(\rho_{ABC})=0$ provided that the  marginal  states  $\{\rho_{B}^k\}_k$ and $\{\rho_{C}^k\}_k$ are mutually  orthogonal.
\end{corollary}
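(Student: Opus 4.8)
The plan is to read off both parts directly from Lemma~\ref{LemmaDeltaConcavity}, supplemented by the single fact that has not yet been used explicitly in this sequence of statements: that $\boldsymbol{T}^{(a)}$ is nonnegative on every tripartite state. That nonnegativity is nothing but the SSA inequality \eqref{SSA2} itself, so in particular $\boldsymbol{T}^{(a)}(\rho_{ABC}^k)\ge 0$ for every member of any ensemble.

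For part (i) I would apply the concavity bound \eqref{DeltaConcavity} to the given ensemble $\{p_k,\rho_{ABC}^k\}_k$, which yields $0=\boldsymbol{T}^{(a)}(\rho_{ABC})\ge\sum_k p_k\,\boldsymbol{T}^{(a)}(\rho_{ABC}^k)$. Since every summand is nonnegative and every weight $p_k$ is strictly positive, a convex combination of nonnegative numbers can equal zero only if each number is zero; hence $\boldsymbol{T}^{(a)}(\rho_{ABC}^k)=0$ for $k=1,\dots,K$.

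For part (ii) I would invoke the equality clause of Lemma~\ref{LemmaDeltaConcavity}: the hypothesis that the marginals $\{\rho_{B}^k\}_k$ and $\{\rho_{C}^k\}_k$ are mutually orthogonal is exactly the condition under which \eqref{DeltaConcavity} becomes an equality, so $\boldsymbol{T}^{(a)}(\rho_{ABC})=\sum_k p_k\,\boldsymbol{T}^{(a)}(\rho_{ABC}^k)$. Substituting the assumption $\boldsymbol{T}^{(a)}(\rho_{ABC}^k)=0$ for every $k$ gives $\boldsymbol{T}^{(a)}(\rho_{ABC})=0$, as claimed.

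There is essentially no obstacle to overcome; the statement is an immediate consequence of the Lemma. The only point worth flagging is an asymmetry between the two parts: part (i) genuinely needs the \emph{positivity} of $\boldsymbol{T}^{(a)}$ (equivalently SSA) on top of concavity — concavity by itself would not force the individual ensemble members to saturate — whereas part (ii) relies solely on the equality condition of the Lemma and uses no positivity input at all.
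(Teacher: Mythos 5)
Your proposal is correct and follows exactly the route the paper intends: the paper gives no explicit proof, stating only that the Corollary is immediate from Lemma~\ref{LemmaDeltaConcavity}, and your argument is the natural filling-in of that claim (concavity plus nonnegativity of $\boldsymbol{T}^{(a)}$, i.e.\ SSA itself, for part (i); the equality clause of the Lemma for part (ii)). Your remark that part (i) genuinely requires the positivity of $\boldsymbol{T}^{(a)}$ on top of concavity is a correct and worthwhile observation that the paper leaves implicit.
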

\textit{Strong subadditivity versus quantum correlations.---}Consider two bipartite states $\rho_{AB}$ and $\rho_{AC}$ with the same tripartite extension $\rho_{ABC}$, i.e., $\rho_{AB}=\Tr_{C}(\rho_{ABC})$ and $\rho_{AC}=\Tr_{B}(\rho_{ABC})$. Let $\ket{\psi_{ABCE}}$ be any purification of $\rho_{ABC}$, where $E$ is an auxiliary subsystem. Define $\widetilde{B}=BE$ and $\widetilde{C}=CE$, i.e., $\mathcal{H}_{\widetilde{B}}=\mathcal{H}_B\otimes \mathcal{H}_E$ and $\mathcal{H}_{\widetilde{C}}=\mathcal{H}_C\otimes \mathcal{H}_E$.  Armed with these definitions,  let us apply   the K-W relation \eqref{KWinequality} to the  pure states $\rho_{AB\widetilde{C}}$ and $\rho_{A\widetilde{B}C}$ and  the mixed state $\rho_{ABC}$ to  get
\begin{eqnarray}\label{KWrhoC}
E(\rho_{AB})&=&D^{(\widetilde{C})}(\rho_{A\widetilde{C}})+S_{\rho_{A\widetilde{C}}}(A \mid \widetilde{C}),
\\ \label{KWrhoB}
E(\rho_{A\widetilde{B}})&=&D^{(C)}(\rho_{AC})+S_{\rho_{AC}}(A \mid C),
\\ \label{KWrho}
E(\rho_{AB})&\le &D^{(C)}(\rho_{AC})+S_{\rho_{AC}}(A \mid C).
\end{eqnarray}
The same relations hold if we exchange $B \leftrightarrow C$; denoting them with (\ref{KWrhoC}$'$), (\ref{KWrhoB}$'$), and (\ref{KWrho}$'$), respectively.
Theorem \ref{TheoremDeltaCorrelations} provides a relation for $\boldsymbol{T}^{(a)}(\rho_{ABC})$ in terms of quantum correlations.
\begin{theorem}\label{TheoremDeltaCorrelations}
$\boldsymbol{T}^{(a)}(\rho_{ABC})$ can be expressed in terms of the quantum correlations of the  aforementioned bipartite states
\begin{eqnarray}\label{DeltaACorrelations}\nonumber
&&\hspace{-8mm}\boldsymbol{T}^{(a)}(\rho_{ABC}) \\ \nonumber
&=&\left[E(\rho_{A\widetilde{B}})-E(\rho_{AB})\right]+\left[D^{(\widetilde{C})}(\rho_{A\widetilde{C}})- D^{(C)}(\rho_{AC})\right] \\  \nonumber
&=&\left[E(\rho_{A\widetilde{C}})-E(\rho_{AC})\right]+\left[D^{(\widetilde{B})}(\rho_{A\widetilde{B}})- D^{(B)}(\rho_{AB})\right] \\  \nonumber
&=&\left[E(\rho_{A\widetilde{B}})+E(\rho_{A\widetilde{C}})\right]-\left[D(\rho_{AB})+D(\rho_{AC})\right] \\ \label{DeltaACorrelations}
&=& \left[D(\rho_{A\widetilde{B}})+D(\rho_{A\widetilde{C}})\right]-\left[E(\rho_{AB})+E(\rho_{AC})\right].
\end{eqnarray}
Moreover
\begin{eqnarray}\label{deltaE}
\delta^{(a)}_x(E)=E(\rho_{A\widetilde{X}})-E(\rho_{AX}) &\ge & 0,
\end{eqnarray}
for $X=B,C$.
\end{theorem}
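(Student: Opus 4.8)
The plan is to derive all four expressions for $\boldsymbol{T}^{(a)}(\rho_{ABC})$ in \eqref{DeltaACorrelations} directly from the Koashi--Winter relations \eqref{KWrhoC}, \eqref{KWrhoB}, \eqref{KWrho} (and their primed versions) together with the conservation law \eqref{ConservationLaw}, using nothing beyond purity of $|\psi_{ABCE}\rangle$. First I would record the entropic facts forced by that purity: since $\rho_{A\widetilde{C}}=\rho_{ACE}$ and $\rho_{A\widetilde{B}}=\rho_{ABE}$ are reductions of a global pure state, one has $S(\rho_{ACE})=S(\rho_{B})$, $S(\rho_{CE})=S(\rho_{AB})$, $S(\rho_{ABE})=S(\rho_{C})$ and $S(\rho_{BE})=S(\rho_{AC})$, so that the enlarged conditional entropies obey $S_{\rho_{A\widetilde{C}}}(A\mid\widetilde{C})=S(\rho_{B})-S(\rho_{AB})$ and $S_{\rho_{A\widetilde{B}}}(A\mid\widetilde{B})=S(\rho_{C})-S(\rho_{AC})$.

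For the first equality in \eqref{DeltaACorrelations} I would substitute \eqref{KWrhoB} for $E(\rho_{A\widetilde{B}})$ and \eqref{KWrhoC} for $E(\rho_{AB})$ into the bracketed right-hand side; all discord terms cancel and what remains is $S_{\rho_{AC}}(A\mid C)-S_{\rho_{A\widetilde{C}}}(A\mid\widetilde{C})$, which by \eqref{CE} and the identities just recorded equals $S(\rho_{AB})+S(\rho_{AC})-S(\rho_{B})-S(\rho_{C})=\boldsymbol{T}^{(a)}(\rho_{ABC})$. The second equality is the $B\leftrightarrow C$ mirror of this, obtained identically from (\ref{KWrhoB}$'$) and (\ref{KWrhoC}$'$). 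The third and fourth equalities I would obtain not by a new computation but by rewriting the first one: applying the conservation law \eqref{ConservationLaw} to the pure tripartite state $\rho_{AB\widetilde{C}}$ (parties $A$, $B$, $\widetilde{C}$) gives $D^{(\widetilde{C})}(\rho_{A\widetilde{C}})=E(\rho_{AB})+E(\rho_{A\widetilde{C}})-D^{(B)}(\rho_{AB})$, and inserting this into the first line collapses it to the third; applying \eqref{ConservationLaw} instead to the pure state $\rho_{A\widetilde{B}C}$ gives $E(\rho_{A\widetilde{B}})=D^{(\widetilde{B})}(\rho_{A\widetilde{B}})+D^{(C)}(\rho_{AC})-E(\rho_{AC})$, and inserting that into the first line yields the fourth. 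Throughout, an unsuperscripted $D$ denotes discord measured on the non-$A$ party, i.e.\ $D(\rho_{AX})\equiv D^{(X)}(\rho_{AX})$ and $D(\rho_{A\widetilde{X}})\equiv D^{(\widetilde{X})}(\rho_{A\widetilde{X}})$, consistent with \eqref{ConservationLaw}.

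For the inequality \eqref{deltaE} the shortest argument is that, for $X=B$, the relations \eqref{KWrhoB} and \eqref{KWrho} carry the \emph{same} right-hand side $D^{(C)}(\rho_{AC})+S_{\rho_{AC}}(A\mid C)$---with equality in the former and as an upper bound for $E(\rho_{AB})$ in the latter---so their difference gives $E(\rho_{A\widetilde{B}})-E(\rho_{AB})\ge0$; the case $X=C$ follows the same way from (\ref{KWrhoB}$'$) and (\ref{KWrho}$'$). Equivalently, $\delta^{(a)}_x(E)\ge0$ is just monotonicity of the entanglement of formation under the local channel $\Tr_{E}$ that takes $\widetilde{X}=XE$ to $X$. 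I expect the only genuine difficulty to be bookkeeping: keeping track of which tripartite pure state each instance of Koashi--Winter and of \eqref{ConservationLaw} is applied to, and correctly identifying the complements of $B$, $C$, $BE$ and $CE$ inside $|\psi_{ABCE}\rangle$; once \eqref{KWrhoC}--\eqref{KWrho} and their primes are on the table, the rest is linear manipulation of von Neumann entropies.
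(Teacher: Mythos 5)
Your proof is correct and follows essentially the same route as the paper: the first two lines of \eqref{DeltaACorrelations} come from subtracting the Koashi--Winter identities \eqref{KWrhoC} and \eqref{KWrhoB} (and their primes) and using purity of $\ket{\psi_{ABCE}}$ to convert the conditional entropies, and \eqref{deltaE} comes from comparing \eqref{KWrhoB} with \eqref{KWrho}. The only cosmetic difference is that you obtain the third and fourth lines by routing through the conservation law \eqref{ConservationLaw} applied to the pure states $\rho_{AB\widetilde{C}}$ and $\rho_{A\widetilde{B}C}$, whereas the paper simply adds \eqref{KWrhoB} to its primed version; since \eqref{ConservationLaw} is itself just the sum of two Koashi--Winter identities, these are the same computation.
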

\begin{proof}
By subtracting Eq. \eqref{KWrhoC} from  \eqref{KWrhoB}, and using that for  any pure state $\rho_{XYZ}$ one can write $S_{\rho_{XZ}}(X \mid Z)=-S_{\rho_{XY}}(X \mid Y)$,   we find the first equality of Eq. \eqref{DeltaACorrelations}. The third equality is obtained by adding Eq. \eqref{KWrhoB} to (\ref{KWrhoB}$'$). The second and fourth equalities  are obtained just by  swapping $B\leftrightarrow C$.  Moreover, the nonnegativity of Eq. \eqref{deltaE}  follows easily by subtracting Eq.  \eqref{KWrho} from \eqref{KWrhoB}.
\end{proof}
Theorem  \eqref{TheoremDeltaCorrelations}  claims that for a given $\rho_{ABC}$ with \emph{a prior} bipartite quantum correlations  of $A$ with other subsystems $B$ and $C$,    $\boldsymbol{T}^{(a)}(\rho_{ABC})$ quantifies the increased quantum correlations  caused by the transformations $B\rightarrow\widetilde{B}=BE$ and $C\rightarrow\widetilde{C}=CE$.   This feature of $\boldsymbol{T}^{(a)}(\rho_{ABC})$ allows one to  find, along with other results, the structure of states for which $\boldsymbol{T}^{(a)}(\rho_{ABC})$  achieves  its  upper or lower bounds. For the former, note that
\begin{eqnarray}\label{DeltaUpperBound}
\boldsymbol{T}^{(a)}(\rho_{ABC}) & \le &
\max\left[E(\rho_{A\widetilde{B}})+E(\rho_{A\widetilde{C}})\right] \\ \nonumber
&-&\min\left[D(\rho_{AB})+D(\rho_{AC})\right] \\ \nonumber
&\le & 2S(\rho_{A})\le 2\log{d_A},
\end{eqnarray}
which obtained from  the third equality of  Eq. \eqref{DeltaACorrelations} and the fact that for any bipartite state $\rho_{XY}$, we have $E(\rho_{XY})\le \min\{S(\rho_{X}),S(\rho_{Y})\}$ and $D^{(Y)}(\rho_{XY})\ge 0$. A simple investigation shows that the last inequality is saturated if and only if
\begin{equation}\label{DeltaUpperBoundRho}
\rho_{ABC}=\mathbb{I}_{d_A}/d_{A}\otimes \rho_{BC},
\end{equation}
where $\mathbb{I}_{d_A}$ denotes the unity  matrix of $\mathcal{H}_A$ and $d_{A}=\dim{\mathcal{H}_A}$.  This  follows from the fact that the minimum in Eq. \eqref{DeltaUpperBound} is obtained  when the prior quantum discord of $A$ with $B$ and $C$ are zero, and the maximum is achieved when the transformed states $\rho_{A\widetilde{B}}$ and $\rho_{A\widetilde{C}}$ are maximally entangled states. Both extermums will be attained  simultaneously  if and only if $\rho_{A}$ is maximally mixed state  and factorized from the rest of the system,  so that  $A$  does not possess  any prior correlation with $B$ and $C$ at all.       Although Eq. \eqref{DeltaUpperBoundRho}   can be obtained   by means of  Eqs.  \eqref{SA} and   \eqref{SSA2p}, the above investigation shows the role of quantum correlations more clearly.

\textit{Conditions for $\boldsymbol{T}^{(a)}(\rho_{ABC})=0$.---}Theorem \ref{TheoremDeltaCorrelations} states that  a state satisfies SSA with equality if and only if the quantum correlations of $A$ with $B$ and $C$ do not change under the transformation $B\rightarrow\widetilde{B}$ and $C\rightarrow\widetilde{C}$. This happens, in particular, whenever $\rho_{ABC}$ is pure, but this is not the only case that $\boldsymbol{T}^{(a)}(\rho_{ABC})$ vanishes. Indeed, it is not difficult to see that any state of the form $\rho_{ABC}=\ket{\psi_{AY}}\bra{\psi_{AY}}\otimes \rho_{Z}$     possesses quantum correlations which are invariant under transformations $B\rightarrow \widetilde{B}$ and $C\rightarrow \widetilde{C}$.
\begin{lemma}\label{LemmaSSAcondition}
Let  $Y$ be a partition of $BC$ and $Z$ its complement, i.e., $\mathcal{H}_{Y}\otimes\mathcal{H}_{Z}=\mathcal{H}_{B}\otimes \mathcal{H}_{C}$. More precisely, let $\mathcal{H}_{B}$ and $\mathcal{H}_{C}$ can be tensor producted as $\mathcal{H}_{B}=\mathcal{H}_{B^L}\otimes\mathcal{H}_{B^R}$ and $\mathcal{H}_{C}=\mathcal{H}_{C^L}\otimes\mathcal{H}_{C^R}$, respectively, and define $\mathcal{H}_{Y}=\mathcal{H}_{B^L}\otimes\mathcal{H}_{C^L}$ and $\mathcal{H}_{Z}=\mathcal{H}_{B^R}\otimes\mathcal{H}_{C^R}$.  Then for any state of the form
\begin{equation}\label{PsiRhoProduct1}
\rho_{ABC}=\ket{\psi_{AY}}\bra{\psi_{AY}}\otimes \rho_{Z},
\end{equation}
we have $\boldsymbol{T}^{(a)}(\rho_{ABC})=0$.
\end{lemma}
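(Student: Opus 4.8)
The plan is to compute directly the four reduced states entering $\boldsymbol{T}^{(a)}(\rho_{ABC})$ in Eq.~\eqref{DeltaARight-Left} and to exploit the tensor-product structure of the state \eqref{PsiRhoProduct1}. Because $\mathcal{H}_{B}=\mathcal{H}_{B^L}\otimes\mathcal{H}_{B^R}$ and $\mathcal{H}_{C}=\mathcal{H}_{C^L}\otimes\mathcal{H}_{C^R}$, the partial trace over $C=C^LC^R$ acts on the factor $\ket{\psi_{AY}}\bra{\psi_{AY}}$ only through $C^L$ and on the factor $\rho_Z$ only through $C^R$; hence
\begin{equation}
\rho_{AB}=\Tr_{C}(\rho_{ABC})=\rho_{AB^L}\otimes\rho_{B^R},
\end{equation}
with $\rho_{AB^L}=\Tr_{C^L}(\ket{\psi_{AY}}\bra{\psi_{AY}})$ and $\rho_{B^R}=\Tr_{C^R}(\rho_Z)$, and this holds irrespective of whether $\rho_Z$ itself factorizes across $B^R|C^R$. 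By the identical reasoning one gets $\rho_{AC}=\rho_{AC^L}\otimes\rho_{C^R}$, $\rho_{B}=\rho_{B^L}\otimes\rho_{B^R}$, and $\rho_{C}=\rho_{C^L}\otimes\rho_{C^R}$.

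Next I would invoke additivity of the von Neumann entropy, $S(\sigma\otimes\tau)=S(\sigma)+S(\tau)$, in the definition \eqref{DeltaARight-Left}. The contributions $S(\rho_{B^R})$ and $S(\rho_{C^R})$ then cancel between the pair $\{\rho_{AB},\rho_{AC}\}$ and the pair $\{\rho_{B},\rho_{C}\}$, leaving
\begin{equation}
\boldsymbol{T}^{(a)}(\rho_{ABC})=S(\rho_{AB^L})+S(\rho_{AC^L})-S(\rho_{B^L})-S(\rho_{C^L}),
\end{equation}
which is precisely $\boldsymbol{T}^{(a)}$ evaluated on the \emph{pure} tripartite state $\ket{\psi_{AB^LC^L}}$ with parts $A$, $B^L$, $C^L$.

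Finally I would close the argument by recalling that on a pure tripartite state the Schmidt decompositions across the cuts $AB^L|C^L$ and $AC^L|B^L$ give $S(\rho_{AB^L})=S(\rho_{C^L})$ and $S(\rho_{AC^L})=S(\rho_{B^L})$, so the expression above vanishes; equivalently, SSA is always saturated on pure states, as already noted in the text. Hence $\boldsymbol{T}^{(a)}(\rho_{ABC})=0$. I do not expect a genuinely hard step here: the only point requiring care is the bookkeeping of which tensor factor each partial trace acts on --- one must verify that the ``left'' registers $B^L,C^L$ stay attached to the pure factor $\ket{\psi_{AY}}\bra{\psi_{AY}}$ while the ``right'' registers $B^R,C^R$ stay with $\rho_Z$. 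An alternative route through Theorem~\ref{TheoremDeltaCorrelations}, verifying instead that the EOF and QD of $\rho_{AB}$ and $\rho_{AC}$ are invariant under $B\to BE$ and $C\to CE$, is also available but considerably more laborious, so the direct entropic computation is preferable.
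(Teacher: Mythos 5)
Your proof is correct, but it takes a genuinely different route from the paper's. The paper derives the lemma from Theorem~\ref{TheoremDeltaCorrelations}: it observes that any purification of $\rho_{ABC}$ attaches the ancilla $E$ only to the mixed factor $\rho_Z$, yielding $\ket{\psi_{AY}}\bra{\psi_{AY}}\otimes\rho_{\widetilde{Z}}$ with $\widetilde{Z}=ZE$, so the quantum correlations of $\rho_{AB}$ and $\rho_{AC}$ are unchanged under $B\to\widetilde{B}$, $C\to\widetilde{C}$, and the first line of Eq.~\eqref{DeltaACorrelations} then gives $\boldsymbol{T}^{(a)}(\rho_{ABC})=0$. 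You instead compute the four marginals explicitly, $\rho_{AB}=\rho_{AB^L}\otimes\rho_{B^R}$, $\rho_{AC}=\rho_{AC^L}\otimes\rho_{C^R}$, $\rho_{B}=\rho_{B^L}\otimes\rho_{B^R}$, $\rho_{C}=\rho_{C^L}\otimes\rho_{C^R}$ (all correct, and correctly noted to hold even if $\rho_Z$ does not factorize across $B^R|C^R$), use additivity of the entropy to cancel the $S(\rho_{B^R})$ and $S(\rho_{C^R})$ contributions, and reduce the claim to saturation of \eqref{SSA2} on the pure state $\ket{\psi_{AB^LC^L}}$, which follows from the Schmidt decomposition. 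Your route is more elementary and arguably tighter: it bypasses Theorem~\ref{TheoremDeltaCorrelations} entirely and replaces the paper's somewhat informal closing step (``clearly, this does not create any correlation between the first and the second part'') with an explicit entropy computation. What the paper's approach buys in exchange is the information-theoretic reading that the lemma is really a statement about invariance of EOF and QD under the purifying transformation, which is the viewpoint on which Theorem~\ref{TheoremSSAcondition} and the rest of the paper are built.
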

\begin{proof}
Recall that purification  produces entanglement between the system under consideration and the auxiliary system  if and only if  the original system is impure \cite{NielsenBook2000}.  Any purification of $\rho_{ABC}$ leads to a purification of $\rho_{Z}$ as $\rho_{\widetilde{Z}}$ where $\widetilde{Z}=ZE$ for the  auxiliary subsystem $E$.
In turn, it change $\rho_{ABC}$ to $\rho_{ABCE}=\ket{\psi_{AY}}\bra{\psi_{AY}}\otimes \rho_{\widetilde{Z}}$.  Clearly, this does not create any correlation between the first and the second part of $\rho_{ABC}$.
\end{proof}
Although the above Lemma provides a sufficient condition for equality of SSA, it is not necessary in general.  However, it provides building blocks for the structure of states  for which the  SSA inequality is satisfied with equality. Indeed, for a fixed pure state $\ket{\psi_{AY}}$, it is not difficult to see that the set of states with structure given by Eq. \eqref{PsiRhoProduct1} forms a convex subset of the set of all states. It follows therefore  that \emph{only} ensembles of the form $\{p_k,\ket{\psi^k_{AY_k}}\bra{\psi^k_{AY_k}}\otimes \rho^k_{Z_k}\}_{k}$ can realize a state with $\boldsymbol{T}^{(a)}(\rho_{ABC})=0$. Theorem \ref{TheoremSSAcondition} provides the  necessary and sufficient conditions on $Y_k$ and $Z_k$, in order to achieve states with $\boldsymbol{T}^{(a)}(\rho_{ABC})=0$.
\begin{theorem}\label{TheoremSSAcondition}
For a given $\rho_{ABC}$ we have $\boldsymbol{T}^{(a)}(\rho_{ABC})=0$ if and only if  $\rho_{ABC}$ can be expressed as
\begin{equation}\label{RhoABC-RhoABCk}
\rho_{ABC}=\sum_{k=1}^Kp_k\rho_{ABC}^k,
\end{equation}
such that the marginal states  $\{\rho_{B}^k\}_k$ and $\{\rho_{C}^k\}_k$ are mutually orthogonal and  $\rho_{ABC}^k$ has the following form
\begin{equation}\label{PsiRhoProductK}
\rho_{ABC}^k=\ket{\psi^k_{AY_k}}\bra{\psi^k_{AY_k}}\otimes \rho^k_{Z_k}.
\end{equation}
Here $\rho_{ABC}^k$ is defined on $\mathcal{H}_{A}\otimes \mathcal{H}_{B_k}\otimes \mathcal{H}_{C_k}$, and  $Y_k$ is a partition of $B_kC_k$ and $Z_k$ denotes its complement in such a way that $\mathcal{H}_{Y_k}=\mathcal{H}_{B_k^L}\otimes \mathcal{H}_{C_k^L}$ and $\mathcal{H}_{Z_k}=\mathcal{H}_{B_k^R}\otimes \mathcal{H}_{C_k^R}$.
\end{theorem}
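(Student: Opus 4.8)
\emph{Proof plan.}—The ``if'' direction is immediate from the preceding results: by Lemma~\ref{LemmaSSAcondition} each $\rho_{ABC}^k$ of the form \eqref{PsiRhoProductK} has $\boldsymbol{T}^{(a)}(\rho_{ABC}^k)=0$, and since the marginals $\{\rho_B^k\}_k$ and $\{\rho_C^k\}_k$ are mutually orthogonal, part~(ii) of Corollary~\ref{CorollaryZeroDelta} promotes this to $\boldsymbol{T}^{(a)}(\rho_{ABC})=0$ for $\rho_{ABC}=\sum_k p_k\rho_{ABC}^k$.

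For the ``only if'' direction the plan is to convert the problem into a genuine strong-subadditivity saturation and then feed it into the structure theorem of Hayden \etal \cite{HaydenCMP2004}. First I would fix a purification $\ket{\psi_{ABCE}}$ of $\rho_{ABC}$ and, using the pure-state identities $S(\rho_C)=S(\rho_{ABE})$ and $S(\rho_{AC})=S(\rho_{BE})$, rewrite $\boldsymbol{T}^{(a)}(\rho_{ABC})=[S(\rho_{AB})+S(\rho_{BE})]-[S(\rho_{ABE})+S(\rho_B)]$; hence $\boldsymbol{T}^{(a)}(\rho_{ABC})=0$ is exactly the statement that $\rho_{ABE}$ saturates \eqref{SSA1} with $B$ playing the role of the shared subsystem. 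By \cite{HaydenCMP2004} this forces $\mathcal{H}_B=\bigoplus_k\mathcal{H}_{B_k^L}\otimes\mathcal{H}_{B_k^R}$ together with $\rho_{ABE}=\bigoplus_k q_k\,\rho_{AB_k^L}\otimes\rho_{B_k^R E}$. Because the blocks occupy orthogonal subspaces of $\mathcal{H}_B$, the purification splits as $\ket{\psi_{ABCE}}=\sum_k\sqrt{q_k}\ket{\psi_k}$ with $\ket{\psi_k}$ orthonormal and carried by the $k$-th $B$-block; since $\Tr_C\project{\psi_k}=q_k\,\rho_{AB_k^L}\otimes\rho_{B_k^R E}$ is a tensor product across the cut $AB_k^L\mid B_k^R E$, each $\ket{\psi_k}$ factorizes along the same cut, $\ket{\psi_k}=\ket{\mu_k}_{AB_k^L C_k^L}\otimes\ket{\nu_k}_{B_k^R C_k^R E}$, for a suitable splitting $\mathcal{H}_{C_k^L}\otimes\mathcal{H}_{C_k^R}$ of the portion of $\mathcal{H}_C$ supporting $\ket{\psi_k}$, with $\ket{\mu_k}$ purifying $\rho_{AB_k^L}$ and $\ket{\nu_k}$ purifying $\rho_{B_k^R E}$. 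Tracing out $E$, the diagonal ($k=k'$) part of $\rho_{ABC}$ is already $q_k\project{\mu_k}\otimes\rho^k_{B_k^R C_k^R}$ with $\rho^k_{B_k^R C_k^R}=\Tr_E\project{\nu_k}$, that is, exactly the form \eqref{PsiRhoProductK} with $\ket{\psi^k_{AY_k}}=\ket{\mu_k}$, $\rho^k_{Z_k}=\rho^k_{B_k^R C_k^R}$, $p_k=q_k$, $Y_k=B_k^L C_k^L$ and $Z_k=B_k^R C_k^R$.

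What then remains — and where I expect the real work to lie — is to show that the off-diagonal ($k\ne k'$) contributions to $\rho_{ABC}$ drop out and that the marginals $\{\rho_B^k\}$ and $\{\rho_C^k\}$ are mutually orthogonal. Here the plan is to invoke purity once more. Since $\Tr_B$ annihilates every $k\ne k'$ term (such a term maps one $B$-block to a different one), one gets $\rho_C=\sum_k q_k\,\rho^{\mu}_{C_k^L}\otimes\rho_{C_k^R}$, so $S(\rho_C)\le H(\{q_k\})+\sum_k q_k\bigl[S(\rho^{\mu}_{C_k^L})+S(\rho_{C_k^R})\bigr]$ with equality precisely when the states $\{\rho^{\mu}_{C_k^L}\otimes\rho_{C_k^R}\}_k$ sit in mutually orthogonal subspaces of $\mathcal{H}_C$; but $S(\rho^{\mu}_{C_k^L})=S(\rho_{AB_k^L})$ and $S(\rho_{C_k^R})=S(\rho_{B_k^R E})$ by complementarity of the purifications $\ket{\mu_k},\ket{\nu_k}$, while $S(\rho_C)=S(\rho_{ABE})=H(\{q_k\})+\sum_k q_k\bigl[S(\rho_{AB_k^L})+S(\rho_{B_k^R E})\bigr]$, so the inequality is in fact tight and the orthogonality is forced, identifying $\mathcal{H}_{C_k}=\mathcal{H}_{C_k^L}\otimes\mathcal{H}_{C_k^R}$ as mutually orthogonal blocks of $\mathcal{H}_C$, the orthogonality of $\{\rho_B^k\}$ coming directly from the Hayden block structure. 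The last point — removing the surviving cross terms — is the subtlest: the cleanest way I see is to choose the Hayden decomposition of $\rho_{ABE}$ minimal, amalgamating any two blocks that can be fused into a single tensor-product block, so that distinct blocks carry genuinely inequivalent $\rho_{B_k^R E}$ and the identity $S(\rho_{ABC})=S(\rho_E)$ can be used, together with its $B\leftrightarrow C$ counterpart $S(\rho_B)=S(\rho_{ACE})$, to conclude that $\rho_{ABC}$ equals its pinching onto the $B$-blocks (equivalently onto the $C$-blocks). Alternatively, one can simply invoke the explicit structure of states saturating \eqref{SSA2} obtained in \cite{ZhangCTP2015} and check that it coincides with \eqref{RhoABC-RhoABCk}--\eqref{PsiRhoProductK}. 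Thus the main obstacle is not the passage to Hayden's theorem but the bookkeeping that simultaneously kills the residual off-block coherences and aligns the $B$- and $C$-block decompositions under a common index $k$, which is exactly the step that uses the saturation hypothesis in full.
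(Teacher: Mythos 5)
Your ``if'' direction is the same as the paper's (Lemma~\ref{LemmaSSAcondition} plus Corollary~\ref{CorollaryZeroDelta}-(ii)). For the ``only if'' direction you take a genuinely different route: the paper assumes the component form \eqref{PsiRhoProductK} via the convexity remark preceding the theorem and then extracts the orthogonality of the marginals from the invariance conditions $E(\rho_{A\widetilde{B}})=E(\rho_{AB})$ and $E(\rho_{A\widetilde{C}})=E(\rho_{AC})$ supplied by Theorem~\ref{TheoremDeltaCorrelations}, whereas you observe that $\boldsymbol{T}^{(a)}(\rho_{ABC})=S(\rho_{AB})+S(\rho_{BE})-S(\rho_{ABE})-S(\rho_{B})$ is exactly the SSA gap of $\rho_{ABE}$ and import the structure theorem of \cite{HaydenCMP2004}. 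That reduction is correct, the factorization of each $\ket{\psi_k}$ across the cut $AB_k^L\mid B_k^RE$ is correct, and your entropy argument forcing $\rho_C^k\perp\rho_C^{k^\prime}$ (matching $S(\rho_C)=S(\rho_{ABE})$ against the mixing upper bound) is correct and elegant; it even supplies a justification for the component form \eqref{PsiRhoProductK} that the paper itself only asserts.

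However, the step you defer --- eliminating the off-diagonal terms $\sqrt{q_kq_{k^\prime}}\,\Tr_E\!\left[\ket{\psi_k}\bra{\psi_{k^\prime}}\right]$ for $k\ne k^\prime$ --- is a genuine gap, not bookkeeping. These terms really can survive for a legitimate Hayden decomposition: take $\ket{\Psi_{ABCE}}=\left(\sqrt{q_1}\ket{0_A0_B0_C}+\sqrt{q_2}\ket{1_A1_B1_C}\right)\otimes\ket{0_E}$. Then $\rho_{ABE}$ admits the two-block decomposition $q_1\project{0_A}\otimes\project{0_B}\otimes\project{0_E}+q_2\project{1_A}\otimes\project{1_B}\otimes\project{0_E}$, the $C$-marginals of the two components are orthogonal, and yet $\rho_{ABC}$ is the GHZ state, which is \emph{not} the sum of the two diagonal blocks; the theorem holds here only through the coarser $K=1$ decomposition with $Y=BC$. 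So your whole argument hinges on choosing the minimal (maximally amalgamated) decomposition and proving that minimality forces $\rho_{ABC}$ to equal its pinching onto the $B$-blocks --- which is precisely the assertion you state without proof, and the entropy identities $S(\rho_{ABC})=S(\rho_E)$, $S(\rho_B)=S(\rho_{ACE})$ you gesture at are not obviously sufficient to establish it. The fallback of citing \cite{ZhangCTP2015} would close the gap but turns the theorem into a corollary of that reference. To finish along your lines you would need to show that a surviving coherence between blocks $k\ne k^\prime$ allows those blocks to be fused into a single product block (contradicting minimality), or else abandon the block decomposition of $\rho_{ABE}$ as the starting point and argue directly from the equalities $E(\rho_{A\widetilde{B}})=E(\rho_{AB})$ and $E(\rho_{A\widetilde{C}})=E(\rho_{AC})$ as the paper does.
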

\begin{proof}
The sufficient condition is a simple consequence of Corollary \ref{CorollaryZeroDelta}-(ii) and Lemma \ref{LemmaSSAcondition}. To prove the necessary condition,  let $\rho_{ABC}$ be a state such that $\boldsymbol{T}^{(a)}(\rho_{ABC})=0$. It follows from Corollary \ref{CorollaryZeroDelta}-(i) that  $\rho_{ABC}=\sum_k p_k\rho_{ABC}^k$ implies $\boldsymbol{T}^{(a)}(\rho_{ABC}^k)=0$ for $k=1,\cdots,K$, as such,  $\rho_{ABC}^k$  takes the form given by \eqref{PsiRhoProductK}. It remains only to prove   that $\rho_{B}^k \perp \rho_{B}^{k^\prime}$ and  $\rho_{C}^k \perp \rho_{C}^{k^\prime}$ for $k\ne k^\prime$. Using the spectral decomposition $\rho_{Z_k}^k=\sum_{i_k}\mu_{k}^{i_k}\ket{\mu_{Z_k}^{i_k}}\bra{\mu_{Z_k}^{i_k}}$ and introducing the auxiliary subsystem $E$ with an orthogonal decomposition $\mathcal{H}_E=\bigoplus_{k}^{K}\mathcal{H}_{E_k}$ and corresponding orthonormal basis $\{\ket{\lambda_{E_k}^{i_k}}\}_{i_k}$, for $k=1,\cdots,K$, we arrive at the  following  purification for $\rho_{ABC}$
\begin{equation}\label{PsiABCE}
\ket{\Psi_{ABCE}}=\sum_{k}^{K}\sqrt{p_k}\ket{\psi^k_{AY_k}}\otimes \ket{\phi^k_{Z_kE_k}},
\end{equation}
where $\ket{\phi^k_{Z_kE_k}}=\sum_{i_k}\sqrt{\mu_{k}^{i_k}}\ket{\mu_{Z_k}^{i_k}}\ket{\lambda_{E_k}^{i_k}}$. Tracing out $C$ and defining $\widetilde{B}=BE$, we find
\begin{eqnarray}\label{RhoABE} \nonumber
\rho_{A\widetilde{B}}&=&\sum_{k,k^\prime}^{K}\sqrt{p_kp_{k^\prime}}
\Tr_{C}\left[\ket{\psi^k_{AY_k}}\bra{\psi^{k^\prime}_{AY_{k^\prime}}}\otimes \ket{\phi^k_{Z_kE_k}}\bra{\phi^{k^\prime}_{Z_{k^\prime}E_{k^\prime}}}\right] \\
 \nonumber
&=&\sum_{k\ne k^\prime}^{K}\sqrt{p_kp_{k^\prime}}
\Tr_{C}\left[\ket{\psi^k_{AY_k}}\bra{\psi^{k^\prime}_{AY_{k^\prime}}}\otimes \ket{\phi^k_{Z_kE_k}}\bra{\phi^{k^\prime}_{Z_{k^\prime}E_{k^\prime}}}\right]
\\ \label{RhoABE}
&+&\sum_{k}^{K}p_k \sigma_{AB_k^L}^k\otimes \varrho_{B_k^RE_k}^k.
\end{eqnarray}
Here  we have defined $\sigma_{AB_k^L}^k=\Tr_{C_k^L}\ket{\psi^k_{AY_k}}\bra{\psi^k_{AY_k}}$ and  $\varrho_{B_k^RE_k}^k=\Tr_{C_k^R}\left[\ket{\phi^k_{Z_kE_k}}\bra{\phi^{k}_{Z_{k}E_{k}}}\right]$ as states on $\mathcal{H}_{A}\otimes \mathcal{H}_{B_k^L}$ and $\mathcal{H}_{B_k^R}\otimes \mathcal{H}_{E}$, respectively. Now, tracing out $C$ from $\rho_{ABC}$ of Eq. \eqref{RhoABC-RhoABCk}, we get
\begin{eqnarray}\label{RhoAB}
\rho_{AB}=\sum_{k}^{K}p_k
\sigma_{AB_k^L}^k \otimes \varrho_{B_k^R}^k,
\end{eqnarray}
where  $\varrho_{B_k^R}^k=\Tr_{C_{k}^R}\rho_{Z_k}^k$. Comparing this with $\rho_{A\widetilde{B}}$ given by Eq. \eqref{RhoABE},
one can easily see that $E(\rho_{A\widetilde{B}})=E(\rho_{AB})$ if and only if $\rho_{A\widetilde{B}}$ can be written as
\begin{eqnarray}\label{RhoABE2}
\rho_{A\widetilde{B}}=\sum_{k}^{K}p_k
\sigma_{AB_k^L}^k \otimes \varrho_{B_k^RE_k}^k.
\end{eqnarray}
This happens if and only if the first term of the second equality of  \eqref{RhoABE} vanishes, i.e., $\Tr_{C}\left[\ket{\psi^k_{AY_k}}\bra{\psi^{k^\prime}_{AY_{k^\prime}}}\otimes \ket{\phi^k_{Z_kE_k}}\bra{\phi^{k^\prime}_{Z_{k^\prime}E_{k^\prime}}}\right]=0$. This, in turn implies     $\rho_{C}^k \perp \rho_{C}^{k^\prime}$ for $k\ne k^\prime$,
where $\rho_{C}^k=\Tr_{AB}[\ket{\psi^k_{AY_k}}\bra{\psi^k_{AY_k}}\otimes\ket{\mu_{Z_k}^{i_k}}\bra{\mu_{Z_k}^{i_k}}]$ and $\rho_{C}^{k^\prime}$ is defined similarly.
In a same manner we find that the condition $E(\rho_{A\widetilde{C}})=E(\rho_{AC})$ leads to $\rho_{B}^k \perp \rho_{B}^{k^\prime}$ for $k\ne k^\prime$. Using this,  it follows that  $D^{(\widetilde{B})}(\rho_{A\widetilde{B}})=D^{(B)}(\rho_{AB})$, which completes the proof.
\end{proof}
The following Corollary is immediately obtained from Theorem \ref{TheoremSSAcondition}.
\begin{corollary}\label{Corollary-Results}
(i) $\rho_{ABC}$ satisfies the SSA inequality \eqref{SSA2} with equality if and only if  it satisfies the Koashi-Winter relation \eqref{KWinequality} with equality, i.e., $S(\rho_{B})+S(\rho_{C}) = S(\rho_{AB})+S(\rho_{AC})$ implies $E(\rho_{AB})= D^{(C)}(\rho_{AC})+S_{\rho}({A\mid C})$ and $E(\rho_{AC})= D^{(B)}(\rho_{AB})+S_{\rho}({A\mid B})$, and vice versa. \\
(ii)  If $\rho_{ABC}$ satisfies the SSA inequality \eqref{SSA2} with equality then it satisfies  the quantum conservation law, i.e., $S(\rho_{B})+S(\rho_{C}) = S(\rho_{AB})+S(\rho_{AC})$ implies  $E(\rho_{AB})+E(\rho_{AC})=D^{(B)}(\rho_{AB})+D^{(C)}(\rho_{AC})$.
\end{corollary}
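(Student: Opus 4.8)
The plan is to read both parts of the Corollary off the characterisation already established: by Theorems~\ref{TheoremDeltaCorrelations} and \ref{TheoremSSAcondition}, $\boldsymbol{T}^{(a)}(\rho_{ABC})=0$ is equivalent to the simultaneous invariances $E(\rho_{A\widetilde{B}})=E(\rho_{AB})$, $E(\rho_{A\widetilde{C}})=E(\rho_{AC})$, $D^{(\widetilde{B})}(\rho_{A\widetilde{B}})=D^{(B)}(\rho_{AB})$, $D^{(\widetilde{C})}(\rho_{A\widetilde{C}})=D^{(C)}(\rho_{AC})$, for any purification $\ket{\psi_{ABCE}}$. Part~(i) is then obtained by passing these invariances through the Koashi--Winter identities, and part~(ii) by substituting $\boldsymbol{T}^{(a)}=0$ into the third line of \eqref{DeltaACorrelations} and using the entanglement-of-formation invariances.

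For (i) I would place the pure-state identity \eqref{KWrhoB}, $E(\rho_{A\widetilde{B}})=D^{(C)}(\rho_{AC})+S_{\rho_{AC}}(A\mid C)$, next to the mixed-state inequality \eqref{KWrho}, $E(\rho_{AB})\le D^{(C)}(\rho_{AC})+S_{\rho_{AC}}(A\mid C)$: together they show that equality holds in \eqref{KWrho} \emph{iff} $E(\rho_{AB})=E(\rho_{A\widetilde{B}})$. Swapping $B\leftrightarrow C$ and using (\ref{KWrhoB}$'$), equality holds in (\ref{KWrho}$'$) iff $E(\rho_{AC})=E(\rho_{A\widetilde{C}})$. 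Hence ``$\rho_{ABC}$ saturates both Koashi--Winter relations'' is literally ``$E$ is invariant under $B\to\widetilde{B}$ and under $C\to\widetilde{C}$'', which by the characterisation above is exactly $\boldsymbol{T}^{(a)}(\rho_{ABC})=0$; this yields both implications of~(i). Only the entanglement invariances are needed here; the discord invariances come along for free and are used in~(ii).

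For (ii), assuming $\boldsymbol{T}^{(a)}(\rho_{ABC})=0$, I would substitute into the third line of \eqref{DeltaACorrelations} to obtain $E(\rho_{A\widetilde{B}})+E(\rho_{A\widetilde{C}})=D^{(B)}(\rho_{AB})+D^{(C)}(\rho_{AC})$, and then use $E(\rho_{A\widetilde{B}})=E(\rho_{AB})$ and $E(\rho_{A\widetilde{C}})=E(\rho_{AC})$ to rewrite the left-hand side as $E(\rho_{AB})+E(\rho_{AC})$, which is precisely the conservation law \eqref{ConservationLaw}. Equivalently, one may add the two pure-state conservation laws obtained by applying \eqref{ConservationLaw} to the purifications $\rho_{A\widetilde{B}C}$ and $\rho_{AB\widetilde{C}}$ and cancel the invariant terms, or use the fourth line of \eqref{DeltaACorrelations} together with the discord invariances.

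The bookkeeping with the Koashi--Winter identities and the substitution in~(ii) is routine; the point that deserves care, and that I expect to be the main obstacle, is the ``vice versa'' in~(i). There one cannot merely use that $\boldsymbol{T}^{(a)}=0$ makes the entanglement of formation invariant — one needs the converse, that simultaneous invariance of $E(\rho_{AB})$ and $E(\rho_{AC})$ under the purifying enlargement $B\to\widetilde{B}$, $C\to\widetilde{C}$ already forces the direct-sum-of-products structure \eqref{RhoABC-RhoABCk}--\eqref{PsiRhoProductK}. I would secure this exactly as in the proof of Theorem~\ref{TheoremSSAcondition}: the equality $E(\rho_{A\widetilde{X}})=E(\rho_{AX})$ forces the off-diagonal ($k\ne k'$) contributions in \eqref{RhoABE} to vanish, which is what pins $\rho_X^{k}\perp\rho_X^{k'}$ and hence $\boldsymbol{T}^{(a)}(\rho_{ABC})=0$.
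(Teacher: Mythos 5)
The paper gives no explicit proof of this Corollary --- it is simply declared to be ``immediately obtained from Theorem~\ref{TheoremSSAcondition}'' --- so your reconstruction has to be judged against the intended argument. Your forward direction of (i) and your part (ii) are correct and are the natural reading: placing the pure-state identity \eqref{KWrhoB} next to the mixed-state inequality \eqref{KWrho} shows that saturation of the Koashi--Winter relation for the pair $(\rho_{AB},\rho_{AC})$ is literally the statement $E(\rho_{A\widetilde B})=E(\rho_{AB})$ (and likewise with $B\leftrightarrow C$); the proof of Theorem~\ref{TheoremSSAcondition} delivers exactly these entanglement invariances once $\boldsymbol{T}^{(a)}(\rho_{ABC})=0$; and (ii) then follows from the third line of \eqref{DeltaACorrelations}, or equivalently by adding the two saturated Koashi--Winter identities and using $S_{\rho_{AB}}(A\mid B)+S_{\rho_{AC}}(A\mid C)=\boldsymbol{T}^{(a)}(\rho_{ABC})=0$.

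The gap is in the ``vice versa'' of (i), and the repair you propose is circular. You want to deduce $\boldsymbol{T}^{(a)}(\rho_{ABC})=0$ from $E(\rho_{A\widetilde B})=E(\rho_{AB})$ and $E(\rho_{A\widetilde C})=E(\rho_{AC})$ by reusing the off-diagonal-vanishing argument around Eq.~\eqref{RhoABE}. But Eq.~\eqref{RhoABE} is written for a state that has \emph{already} been decomposed as $\sum_k p_k\ket{\psi^k_{AY_k}}\bra{\psi^k_{AY_k}}\otimes\rho^k_{Z_k}$, and in the paper that decomposition is extracted from the hypothesis $\boldsymbol{T}^{(a)}(\rho_{ABC})=0$ via Corollary~\ref{CorollaryZeroDelta}(i) and Lemma~\ref{LemmaSSAcondition} --- i.e., from the very conclusion you are trying to reach; under the Koashi--Winter hypothesis alone a generic $\rho_{ABC}$ admits no such ensemble for the argument to act on. The identities of Theorem~\ref{TheoremDeltaCorrelations} do not close the loop either: substituting $\delta^{(a)}_B(E)=\delta^{(a)}_C(E)=0$ into the four lines of \eqref{DeltaACorrelations} only returns the tautology $\boldsymbol{T}^{(a)}(\rho_{ABC})=\delta^{(a)}_B(D)=\delta^{(a)}_C(D)$, and since the discord increments have no definite sign (a point the paper itself makes in the remark following the Corollary), this does not force $\boldsymbol{T}^{(a)}(\rho_{ABC})=0$. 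A complete proof of the converse therefore needs an independent argument that double Koashi--Winter saturation forces the direct-sum-of-products structure of Theorem~\ref{TheoremSSAcondition}; the paper is equally silent on this direction, so you have correctly located the difficulty even though your proposed fix does not resolve it.
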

Note that we have considered only situation  that the SSA inequality is saturated identically. Indeed, as it is clear from the first line of Eq. \eqref{DeltaACorrelations},  vanishing $\boldsymbol{T}^{(a)}(\rho_{ABC})$ may happen even for nonzero $E(\rho_{A\widetilde{B}})- E(\rho_{AB})$, due to the possibility that   $D^{(\widetilde{C})}(\rho_{A\widetilde{C}})- D^{(C)}(\rho_{AC})$ takes negative value. This \emph{approximate} case that a state \emph{almost} saturates  SSA inequality   is also addressed   in Ref. \cite{HaydenCMP2004} for the SSA inequality \eqref{SSA1}, and it is proved in \cite{BrandaoCMP2011} that in this case $\rho_{ABC}$ is well approximated by structure given in \cite{HaydenCMP2004}.  However, a look at Eq.  \eqref{DeltaACorrelations} shows that  such approximate case does not happen if  $\delta^{(a)}_x(D)=D^{(\widetilde{X})}(\rho_{A\widetilde{X}})- D^{(X)}(\rho_{AX})\ge 0$ for at least one choice of $X=B,C$. For example, when the conservation law holds then $\delta^{(a)}_B(E)+\delta^{(a)}_C(E)=\delta^{(a)}_B(D)+\delta^{(a)}_C(D)$, implies that both $\delta^{(a)}_B(D)$ and $\delta^{(a)}_C(D)$ cannot be negative.

\textit{Examples.---}In order to investigate  how the theorems work,  we provide some illustrative examples. (i) First, let us consider  the tripartite mixed state  $\rho^1_{ABC}=\ket{\psi^1_A}\bra{\psi^1_A}\otimes\varrho^1_{BC}$. Since $A$ is factorized from the rest of the system, so that  $E(\rho^1_{AB})=D^{(C)}(\rho^1_{AC})=0$.  On the other hand, any purification of this state  leads to $\rho^1_{ABC}\rightarrow\ket{\psi^1_{ABCE}}\bra{\psi^1_{ABCE}}=\ket{\psi^1_A}\bra{\psi^1_A}\otimes \ket{\psi^1_{BCE}}\bra{\psi^1_{BCE}}$, where $\ket{\psi^1_{BCE}}$ is a purification of $\varrho^1_{BC}$. It follows that, $A$ is factorized also from $\widetilde{B}=BE$ and $\widetilde{C}=CE$, so that  $E(\rho^1_{A\widetilde{B}})=D^{(\widetilde{C})}(\rho^1_{A\widetilde{C}})=0$.  Therefore, according to the firs line of Eq. \eqref{DeltaACorrelations},  $\boldsymbol{T}^{(a)}(\rho^1_{ABC})=0$. Moreover, for this mixed state the K-W relation is satisfied with equality, reduces  in this case to $0=0+0$.

(ii) Now, as the second example, consider the  tripartite mixed state $\rho^2_{ABC}=\ket{\psi^2_{AB}}\bra{\psi^2_{AB}}\otimes\varrho^2_{C}$. Purification of this state leads to $\rho^2_{ABC}\rightarrow \ket{\psi^2_{AB}}\bra{\psi^2_{AB}}\otimes \ket{\psi^2_{CE}}\bra{\psi^2_{CE}}$, with $\ket{\psi^2_{CE}}$ as a purification of $\varrho^2_{C}$. Clearly, we  find
$E(\rho^2_{A\widetilde{B}})=E(\rho^2_{AB})=S(\rho^2_{A})$  and  $D^{(\widetilde{C})}(\rho^2_{A\widetilde{C}})=D^{(C)}(\rho^2_{AC})=0$, where $\rho^2_{A}=\Tr[\ket{\psi^2_{AB}}\bra{\psi^2_{AB}}]$. In this case, we arrive again at  $\boldsymbol{T}^{(a)}(\rho^1_{ABC})=0$, and an equality for the  K-W relation \eqref{KWinequality}.

(iii) We present the final example as a convex combination of the states given above, i.e.,
\begin{eqnarray}\label{Exampleiii}
\rho_{ABC}&=&p_1\rho^1_{ABC}+p_2\rho^2_{ABC} \\ \nonumber
&=&p_1\ket{\psi^1_A}\bra{\psi^1_A}\otimes\varrho^1_{BC}+p_2\ket{\psi^2_{AB}}\bra{\psi^2_{AB}}\otimes\varrho^2_{C},
\end{eqnarray}
where $p_1+p_2=1$. Moreover,  we set $\dim{\mathcal{H}_A}=2$ and $\dim{\mathcal{H}_B}=\dim{\mathcal{H}_C}=4$, and define
\begin{eqnarray}
\ket{\psi^1_{A}}&=&\alpha_1\ket{0_A}+\beta_1\ket{1_A}, \\
\ket{\psi^2_{AB}}&=&\alpha_2\ket{0_A0_B}+\beta_2\ket{1_A\phi_B},
\end{eqnarray}
where $\ket{\phi_B}=a\ket{1_B}+b\ket{2_B}$ and
\begin{eqnarray}
\varrho^1_{BC}&=&\lambda_1\ket{2_B2_C}\bra{2_B2_C}+(1-\lambda_1)\ket{3_B3_C}\bra{3_B3_C}, \\
\varrho^2_{C}&=&\lambda_2\ket{0_C}\bra{0_C}+(1-\lambda_2)\ket{1_C}\bra{1_C}.
\end{eqnarray}
Without loss of generality we assume that all parameters are real.
In this case, we find
\begin{eqnarray}\label{ExampleiiiAB}
\rho_{AB}&=&p_1\ket{\psi^1_A}\bra{\psi^1_A}\otimes\varrho^1_{B}+p_2\ket{\psi^2_{AB}}\bra{\psi^2_{AB}}, \\ \label{ExampleiiiAC}
\rho_{AC}&=&p_1\ket{\psi^1_A}\bra{\psi^1_A}\otimes\varrho^1_{C}+p_2\sigma^2_A\otimes\varrho^2_{C},
\end{eqnarray}
where $\sigma^2_A=\Tr_B[\ket{\psi^2_{AB}}\bra{\psi^2_{AB}}]=\alpha_2^2\ket{0_A}\bra{0_A}+\beta_2^2\ket{1_A}\bra{1_A}$.
Using Eq. \eqref{DeltaARight-Left}, one can easily calculate $\boldsymbol{T}^{(a)}(\rho_{ABC})$, where after some simplification takes the form
\begin{eqnarray}\label{ExampleiiiT}\nonumber
\boldsymbol{T}^{(a)}(\rho_{ABC})&=&\sum_{j=1}^4(-)^j\mu_j\log{\mu_j} \\
&-&p_2\beta_2^2\log{p_2\beta_2^2}+p_2\log{p_2}.
\end{eqnarray}
\begin{figure}[t!]
\centering
\includegraphics[width=10cm]{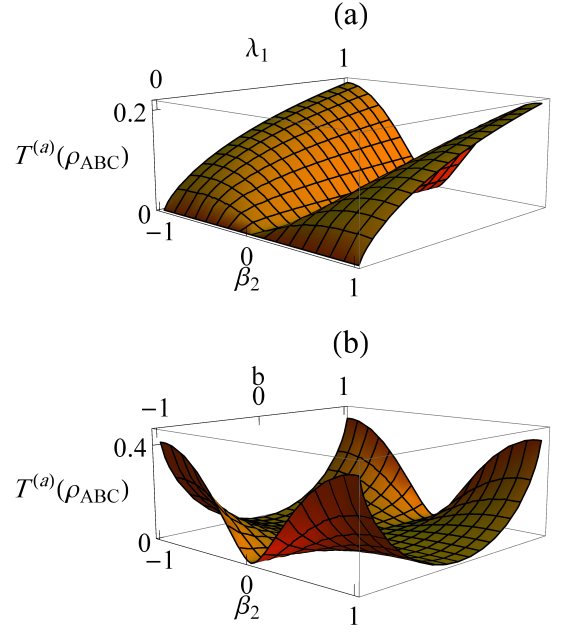}
\caption{(Color online) (a) $\boldsymbol{T}^{(a)}(\rho_{ABC})$ in terms of $\beta_2$ and $\lambda_1$, for $b=1/\sqrt{2}$. (b) $\boldsymbol{T}^{(a)}(\rho_{ABC})$ in terms of $\beta_2$ and $b$, for $\lambda_1=1/2$. In both plots, we assumed $\alpha_1=1/\sqrt{2}$ and  $p_1=\lambda_2=1/2$.}
\label{Figure1}
\end{figure}
Here
\begin{eqnarray}
\mu_{1,3}&=&\frac{1}{2}\left[p_1\lambda_1+p_2\pm\sqrt{(p_1\lambda_1-p_2)^2+4p_1p_2\beta_1^2\gamma^2}\right], \\
\mu_{2,4}&=&\frac{1}{2}\left[p_1\lambda_1+p_2\beta_2^2\pm\sqrt{(p_1\lambda_1-p_2\beta_2^2)^2+4p_1p_2\gamma^2}\right],
\end{eqnarray}
where $\gamma=\sqrt{\lambda_1} b\beta_2$. Simple investigation of Eq. \eqref{ExampleiiiT} shows that $\boldsymbol{T}^{(a)}(\rho_{ABC})$ vanishes  if and only if $\gamma=0$, i.e., one of the parameters  $\lambda_1$, $b$, or $\beta_2$ vanishes. In Fig. \ref{Figure1} we have plotted $\boldsymbol{T}^{(a)}(\rho_{ABC})$ in terms of the pairs $\{\beta_2,\; \lambda_1\}$ and $\{\beta_2, \; b\}$. The figure shows clearly that  $\boldsymbol{T}^{(a)}(\rho_{ABC})$ approaches zero whenever one of the parameters    $\lambda_1$, $b$,  or $\beta_2$ approaches zero. Now, let us turn our attention to Theorem \ref{TheoremSSAcondition} and gain a better understanding of  this Theorem.   To this aim, first note that each term of Eq. \eqref{Exampleiii} has vanishing $\boldsymbol{T}^{(a)}(\rho_{ABC})$, i.e., $\boldsymbol{T}^{(a)}(\rho^1_{ABC})=\boldsymbol{T}^{(a)}(\rho^2_{ABC})=0$. Moreover, using
\begin{eqnarray}
\rho^1_{B}&=& \lambda_1\ket{2_B}\bra{2_B}+(1-\lambda_1)\ket{3_B}\bra{3_B}, \\
\rho^2_{B}&=&\alpha_2^2\ket{0_B}\bra{0_B}+\beta_2^2\ket{\phi_B}\bra{\phi_B}, \\
\rho^1_{C}&=&\lambda_1\ket{2_C}\bra{2_C}+(1-\lambda_1)\ket{3_C}\bra{3_C}, \\
\rho^2_C&=&\varrho^2_C,
\end{eqnarray}
one can easily see that $\rho^1_B\rho^2_B=\lambda_1b\beta_2^2 \ket{2_B}\bra{\phi_B}$ and $\rho^1_C\rho^2_C=0$. This implies that the required conditions of Theorem \ref{TheoremSSAcondition} are satisfied if  and only if $\lambda_1 b \beta_2=0$, which is in  complete agreement with the result obtained  from Eq. \eqref{ExampleiiiT}.

We continue with this example  and apply Theorem \ref{TheoremDeltaCorrelations} and Corollary \ref{Corollary-Results} to evaluate QD and EOF of $2\times 4$ mixed bipartite  states $\rho_{AB}$ and $\rho_{AC}$, reduced from \emph{mixed} tripartite state $\rho_{ABC}$ of Eq.  \eqref{Exampleiii}.   Equation  \eqref{ExampleiiiAC} shows that  $\rho_{AC}$ is separable  and, since  $\rho^1_{C}\perp\rho^2_C$,  we get \cite{ZurekPRL2001}
\begin{equation}\label{ExampleiiiCorrelationAC}
E(\rho_{AC})=D^{(C)}(\rho_{AC})=0,
\end{equation}
which holds for  arbitrary values of $\boldsymbol{T}^{(a)}(\rho_{ABC})$. However, for $\boldsymbol{T}^{(a)}(\rho_{ABC})=0$, i.e., $\lambda_1 b \beta_2=0$, we can use the Corollary \ref{Corollary-Results} and write
\begin{eqnarray}\label{ExampleiiiCorrelationAB}
E(\rho_{AB})=D^{(B)}(\rho_{AB})=S_{\rho_{AC}}(A\mid C).
\end{eqnarray}
Here,  the first equality is obtained from Eq. \eqref{ExampleiiiCorrelationAC} and  the conservation law \ref{Corollary-Results}-(ii),  and the second equality comes from the K-W relation \ref{Corollary-Results}-(i). Furthermore, denoting an orthonormal basis for the auxiliary subsystem $E$ with $\{\ket{\lambda_E}\}_{\lambda=0}^{3}$,  one can provide the following purification for $\rho_{ABC}$
\begin{eqnarray} \nonumber
\ket{\Psi_{ABCE}}&=&\sqrt{p_1}\ket{\psi^1_A}\left[\sqrt{\lambda_1}\ket{2_B2_C2_E}+\sqrt{1-\lambda_1}\ket{3_B3_C3_E}\right] \\
&+&\sqrt{p_2}\ket{\psi^2_{AB}}\left[\sqrt{\lambda_2}\ket{0_C0_E}+\sqrt{1-\lambda_2}\ket{1_C1_E}\right].
\end{eqnarray}
Using this we find
\begin{eqnarray}\label{ExampleiiiRhoABE}
\rho_{A\widetilde{B}}&=&p_1\ket{\psi^1_A}\bra{\psi^1_A}\otimes\varrho^1_{BE}+p_2\ket{\psi^2_{AB}}\bra{\psi^2_{AB}}\otimes\varrho^2_{E}, \\ \label{ExampleiiiRhoACE}
\rho_{A\widetilde{C}}&=&p_1\ket{\psi^1_A}\bra{\psi^1_A}\otimes\varrho^1_{CE}+p_2\sigma^2_{A}\otimes\varrho^2_{CE} \\ \nonumber
&+&\sqrt{p_1p_2\lambda_1}b\left(\ket{1_A}\bra{\psi^1_A}\otimes \ket{\psi_{CE}}\bra{2_C2_E}\right. \\ \nonumber
&& \qquad\qquad+\left.\ket{\psi^1_A}\bra{1_A}\otimes \ket{2_C2_E}\bra{\psi_{CE}}\right),
\end{eqnarray}
where we have defined
\begin{eqnarray}
\varrho^1_{BE}&=&\lambda_1\ket{2_B2_E}\bra{2_B2_E}+(1-\lambda_1)\ket{3_B3_E}\bra{3_B3_E}, \\
\varrho^1_{CE}&=&\lambda_1\ket{2_C2_E}\bra{2_C2_E}+(1-\lambda_1)\ket{3_C3_E}\bra{3_C3_E}, \\
\varrho^2_{CE}&=&\lambda_2\ket{0_C0_E}\bra{0_C0_E}+(1-\lambda_2)\ket{1_C1_E}\bra{1_C1_E}, \\
\varrho^2_{E}&=&\lambda_2\ket{0_E}\bra{0_E}+(1-\lambda_2)\ket{1_E}\bra{1_E}, \\
\ket{\psi_{CE}}&=&\sqrt{\lambda_2}\ket{0_C0_E}+\sqrt{1-\lambda_2}\ket{1_C1_E}.
\end{eqnarray}
The bipartite states \eqref{ExampleiiiRhoABE} and \eqref{ExampleiiiRhoACE} are obtained for  arbitrary values of $\boldsymbol{T}^{(a)}(\rho_{ABC})$. However, if we set one of the parameters  $\lambda_1$, $b$,  or $\beta_2$ equal to zero, we get $\boldsymbol{T}^{(a)}(\rho_{ABC})=0$ and one can use the benefits of invariance  of quantum  correlations of $\rho_{AB}$ and $\rho_{AC}$  under the transformation $B\longrightarrow BE$ and $C\longrightarrow CE$. In this case,  invoking   Eqs. \eqref{ExampleiiiCorrelationAB} and \eqref{ExampleiiiCorrelationAC}, one can  write
\begin{eqnarray}
E(\rho_{A\widetilde{B}})&=&D^{(\widetilde{B})}(\rho_{A\widetilde{B}})=S_{\rho_{AC}}(A \mid C), \\  E(\rho_{A\widetilde{C}})&=&D^{(\widetilde{C})}(\rho_{A\widetilde{C}})=0,
\end{eqnarray}
which are valid as far as  $b\lambda_1\beta_2=0$.

\textit{Conclusion.---}We have defined $\boldsymbol{T}^{(a)}(\rho_{ABC})$ as the extent to which the  tripartite state $\rho_{ABC}$  fails to saturate  SSA inequality.    An important feature of  our approach  is the possibility of writing $\boldsymbol{T}^{(a)}(\rho_{ABC})$ as the amount by which
the bipartite quantum correlations of $\rho_{AB}$ and $\rho_{AC}$  change under the transformation $B\rightarrow \widetilde{B}=BE$  and $C\rightarrow \widetilde{C}=CE$, with  $E$ as an auxiliary subsystem purifying $\rho_{ABC}$.
This feature of $\boldsymbol{T}^{(a)}(\rho_{ABC})$ seems  remarkable since it  provides a simple method to  find the structure of states for which the SSA inequality is saturated by its lower and upper bounds.  The  concavity property of $\boldsymbol{T}^{(a)}(\rho_{ABC})$  with respect to its input reveals that  a state with vanishing $\boldsymbol{T}^{(a)}(\rho_{ABC})$ can  be realized \emph{only} by  those ensembles $\{p_k,\rho_{ABC}^k\}_k$  for which   $\boldsymbol{T}^{(a)}(\rho_{ABC}^k)=0$. We have characterized such ensembles by means of invariance of quantum correlations of  $\rho_{AB}$ and $\rho_{AC}$ under  the transformation $B\rightarrow \widetilde{B}$ and $C\rightarrow \widetilde{C}$.
It turns out   that   $\boldsymbol{T}^{(a)}(\rho_{ABC}^k)=0$ if and only if the subsystem $A$ lives in a \emph{pure} state, so that it  is not affected under purification of $\rho^k_{ABC}$.   On the other hand, the upper bound $\boldsymbol{T}^{(a)}(\rho_{ABC})=2\log{d_A}$ is saturated if and only if the subsystem  $A$ participates in the purification of $\rho_{ABC}$ as possible as it can, i.e.,  $\rho_{A}$ is  maximally mixed and factorized from the rest of the system.  Intuitively, the contribution of the subsystem $A$ in the purification of $\rho_{ABC}$ plays a central role in a sense that the amount by which the quantum correlations of $\rho_{AB}$ and $\rho_{AC}$ changes under the above transformation depends on the extent to which   the subsystem $A$ shares its degrees of freedom in the purification. It happens that,    the more contribution the subsystem $A$ has in the purification of $\rho_{ABC}$, the more quantum correlations will be  shared between $A$ and the subsystems  $\widetilde{B}$ and $\widetilde{C}$, leading  to a greater value for $\boldsymbol{T}^{(a)}(\rho_{ABC})$.

Moreover, the approach presented in this paper explores  that the class of states for which the SSA inequality is saturated  coincides with  those that the K-W inequality is saturated.  This generalizes, to the best of our knowledge, the previous results for pure states to those with vanishing $\boldsymbol{T}^{(a)}(\rho_{ABC})$. Interestingly, the condition $\boldsymbol{T}^{(a)}(\rho_{ABC})=0$ exhausts such states.       In addition, we found that if $S(\rho_{B})+S(\rho_{C}) = S(\rho_{AB})+S(\rho_{AC})$ then   $E(\rho_{AB})+E(\rho_{AC})=D^{(B)}(\rho_{AB})+D^{(C)}(\rho_{AC})$, which is  a possible extension  of the so-called quantum conservation law previously obtained  in  \cite{FanchiniPRA2011} for pure states.   Due to the widespread use of the SSA inequality in quantum information theory, it is hoped that a quantum correlation description of SSA inequality should shed some light on the several inequalities obtained from it. In particular, our results may have applications in  monogamy inequalities, squashed entanglement, Holevo bounds, coherent information, and study of open quantum systems.

The authors  would like to  thank  Fereshte Shahbeigi and Karol \.{Z}yczkowski for  helpful  discussion and comments. This work was supported by Ferdowsi University of Mashhad under grant 3/28328 (1392/07/15).

\end{document}